\definecolor{blue}{RGB}{0,70,140}
\definecolor{green}{RGB}{100,140,0}
\definecolor{red}{RGB}{190,10,50}
\colorlet{structurecolor}{red}
\colorlet{linkcolor}{blue}
\colorlet{citecolor}{green}
\newcommand{\affiliation}{\footnote}
\newcommand{\affiliationmark}[1][\value{footnote}-1]{\footnotemark[\numexpr#1+1\relax]}
\setlist{itemsep=0ex,topsep=0ex,parsep=0.4ex}
\renewcommand*{\backref}[1]{}
\renewcommand*{\backrefalt}[4]{
	\ifcase #1 Not cited.%
	\or $\uparrow$#2%
	\else $\uparrow$#2%
	\fi%
}
\declaretheorem[name=Definition,Refname={Definition,Definitions},numberwithin=section,style=definition]{definition}
\declaretheorem[name=Theorem,Refname={Theorem,Theorems},numberlike=definition,style=plain]{theorem}
\declaretheorem[name=Lemma,Refname={Lemma,Lemmas},numberlike=theorem,style=plain]{lemma}
\newcommand{\defn}[1]{\textcolor{structurecolor}{\emph{#1}}}
\renewcommand{\l}{\mathopen{}\mathclose\bgroup\left}
\renewcommand{\r}{\aftergroup\egroup\right}
\newcommand{\st}{\ifnum\currentgrouptype=16 \mathrel{}\middle|\mathrel{}\else\mathrel{|}\fi}
\DeclarePairedDelimiter{\set}{\{}{\}}
\DeclarePairedDelimiter{\abs}{\lvert}{\rvert}
\DeclarePairedDelimiter{\floor}{\lfloor}{\rfloor}
\newcommand{\subs}{\subseteq}
\newcommand{\sm}{\setminus}
\newcommand{\eps}{\varepsilon}
\newcommand{\ub}[1]{\underbrace{#1}}
\newcommand{\generate}[4]{%
  \def\@tempa{#1} 
  \count@=`#3
  \loop
  \begingroup\lccode`?=\count@
  \lowercase{\endgroup\@namedef{\@tempa ?}{#2{?}}}%
  \ifnum\count@<`#4
  \advance\count@\@ne
  \repeat
}
\newcommand{\DeclareMath}[2]{\newcommand{#1}{\mathnormal{#2}}}
\newcommand{\DeclareMathObject}[2]{\newcommand{#1}{\mathrm{#2}}}
\newcommand{\RedeclareMathObject}[2]{\renewcommand{#1}{\mathrm{#2}}}
\newcommand{\DeclareBinaryMathOperator}[2]{\newcommand{#1}{\mathbin{#2}}}
\DeclareMath{\N}{\mathbb{N}}
\DeclareMath{\Z}{\mathbb{Z}}
\DeclareMath{\Q}{\mathbb{Q}}
\DeclareMath{\R}{\mathbb{R}}
\DeclareMath{\C}{\mathbb{C}}
\DeclareMath{\F}{\mathbb{F}}
\DeclareMath{\ind}{\mathbbm{1}}
\DeclareBinaryMathOperator{\divides}{|}
\DeclareMathOperator{\pr}{\bP}
\DeclareMathOperator{\ex}{\bE}
\RedeclareMathObject{\P}{P}
\DeclareMathObject{\PLS}{PLS}
\newcommand{\restr}[2]{#1|_{#2}}
\newcommand{\impr}[2]{\Delta^{#2}(#1)}
\newcommand{\aperr}{\eps}
\newenvironment{graph}[1][]{\begin{tikzpicture}[>=Latex,#1]}{\end{tikzpicture}}
\newcommand{\vertex}[3][]{\node[circle,fill,inner sep=0cm,minimum size=0.15cm,#1] (#2) at (#3) {};}
\newcommand{\edges}[2][]{\edef\drawedges{\noexpand\graph[#1]{#2};}\drawedges}
\newif\ifexpexdirected
\readlist\expexvertexlabel{#1},
\def\expexedgelabel{#1},
\readlist\expexedgelabel{#1},
\newcommand{\expexstsel}[3]{\ifnum\expexstate<#1 #2\else#3\fi}
\newcommand{\expexstsele}[4]{\ifnum\expexstate<#1 #2\else\ifnum\expexstate=#1 #3\else#4\fi\fi}
\newcommand{\expexvertcol}[1]{\ifnum\expexhighlightvertex=#1 \expexhighlightcolor\else black\fi}
\newcommand{\expexvert}[4]{
  \vertex["{$\expexvertexlabel[#1]$}"{#4},\expexvertcol{#1}]{#1}{#2,\expexstsel{#1}{#3}{1-#3}}
  \ifnum\expexhighlightvertex=#1
    \draw[\expexhighlightcolor,->] (#1) -- (#2,\expexstsel{#1}{1-#3}{#3});
  \fi
}
\newcommand{\expexedge}{\ifexpexdirected->\else--\fi}
\newcommand{\expexbase}[1][]{
  \pgfkeys{expex,default,edge label=,#1}
  \begin{scope}[name prefix=\expexname,shift={(\expexshift)},scale=\expexscale]
    \expexvert{8}{0}{0}{below}
    \expexvert{1}{1}{1}{}
    \edges[edge quotes={auto,inner sep=0.05cm}]{(8) \expexedge["$\expexedgelabel$"'pos=\expexstsele{1}{0.4}{0.5}{0.65}] (1)}
  \end{scope}
}
\newcommand{\expexblackbox}[1][]{
  \pgfkeys{expex,default,#1}
  \begin{scope}[shift={(\expexshift)},scale=\expexscale,name prefix=\expexname]
    \vertex["{$\expexvertexlabel[8]$}"{below},\expexvertcol{1}]{8}{0,\expexstsel{8}{0}{1-0}}
    \expexvert{1}{3}{1}{}
    \edges[no placement,edge quotes={auto,inner sep=0.05cm}]{
      /[coordinate,at={(1,\expexstsel{7}{1}{0})},inner sep=0cm]
          --[dashed,\expexvertcol{1}] /[coordinate,at={(0.5,0.5)}]
          --["$\expexedgelabel[1]$"pos=0.2,\expexvertcol{1}] (8),
      /[coordinate,at={(2,\expexstsel{2}{0}{1})}]
          --[dashed,\expexvertcol{1}] /[coordinate,at={(2.5,0.5)}]
          \expexedge["$\expexedgelabel[2]$"inner sep=0cm,\expexvertcol{1}] (1)
    }
    \node[\expexvertcol{1}] at (1.5,0.5) {\tiny$\bullet\bullet\bullet$};
  \end{scope}
}
\newcommand{\expexgadget}[1][]{
  \pgfkeys{expex,default,#1}
  \begin{scope}[shift={(\expexshift)},scale=\expexscale,name prefix=\expexname]
    \expexvert{8}{0}{0}{below}
    \expexvert{7}{1}{1}{above right,xshift=-0.2cm}
    \expexvert{6}{2}{0}{below right,xshift=-0.2cm}
    \expexvert{5}{3}{1}{above right,xshift=-0.2cm}
    \expexvert{4}{4}{0}{below right,xshift=-0.2cm}
    \expexvert{3}{5}{1}{above right,xshift=-0.2cm}
    \expexvert{2}{6}{0}{below right,xshift=-0.2cm}
    \expexvert{1}{7}{1}{}
  \end{scope}
  \edges[edge quotes={auto,inner sep=0.05cm}]{
    (\expexname 8) \expexedge["$\expexedgelabel[1]$"'pos=\expexstsele{7}{0.4}{0.5}{0.65}] (\expexname 7)
        \expexedge["$\expexedgelabel[1]$"pos=\expexstsele{6}{0.4}{0.5}{0.65}] (\expexname 6)
        \expexedge["$\expexedgelabel[2]$"'pos=\expexstsele{5}{0.4}{0.5}{0.65}] (\expexname 5)
        \expexedge["$\expexedgelabel[2]$"pos=\expexstsele{4}{0.4}{0.5}{0.65}] (\expexname 4)
        \expexedge["$\expexedgelabel[3]$"'pos=\expexstsele{3}{0.4}{0.5}{0.65}] (\expexname 3)
        \expexedge["$\expexedgelabel[3]$"pos=\expexstsele{2}{0.4}{0.5}{0.65}] (\expexname 2)
        \expexedge["$\expexedgelabel[4]$"'pos=\expexstsele{1}{0.4}{0.5}{0.65}] (\expexname 1)
  }
}
\newcommand{\expexgadetconnect}[3][]{
  \pgfkeys{expex,default,#1}
  \edges[edge quotes={auto,inner sep=0.05cm}]{
    (#28) \expexedge[bend left=\expexstsel{7}{55}{70},looseness=\expexstsel{7}{1.3}{1.5},"$\expexedgelabel[1]$"near end] (#37),
    (#28) \expexedge[bend left=\expexstsel{5}{65}{75},looseness=\expexstsel{5}{1.2}{1.4},"$\expexedgelabel[2]$"near end] (#35),
    (#28) \expexedge[bend left=\expexstsel{3}{70}{80},looseness=\expexstsel{3}{1.2}{1.3},"$\expexedgelabel[1]$"near end] (#33),
    (#21) \expexedge[bend right=\expexstsel{6}{55}{70},looseness=\expexstsel{6}{1.3}{1.5},"$\expexedgelabel[3]$"'near end] (#36),
    (#21) \expexedge[bend right=\expexstsel{4}{65}{75},looseness=\expexstsel{4}{1.2}{1.4},"$\expexedgelabel[4]$"'near end] (#34),
    (#21) \expexedge[bend right=\expexstsel{2}{70}{80},looseness=\expexstsel{2}{1.2}{1.3},"$\expexedgelabel[3]$"'near end] (#32)
  }
}
\newif\iftlmaxcutimprovementedge
\newif\iftlmaxcuthighlight
\newcommand{\tlmaxcutsel}[3]{\ifnum\tlmaxcutstate=0 #1\else\ifnum\tlmaxcutstate=1 #2\else#3\fi\fi}
\newcommand{\tlmaxcuttree}[1][]{
  \pgfkeys{tlmaxcut,default,#1}
  \begin{scope}[shift={(\tlmaxcutshift)},scale=\tlmaxcutscale]
    \vertex["$\tlmaxcutvertexlabel$"below]{v1}{0,0}
    \foreach \a in {-160,-30,0,30,80,130,160} {
      \draw (v1) to (\a:0.8) edge (\a+10:1.5) edge (\a-10:1.5);
    }
  \end{scope}
}
\title{Superpolynomial smoothed complexity\\of 3-FLIP in Local Max-Cut}
\author{Lukas Michel\affiliation{Mathematical Institute, University of Oxford, United Kingdom (\textsf{\{\href{mailto:michel@maths.ox.ac.uk}{michel},\href{mailto:scott@maths.ox.ac.uk}{scott}\}@maths.ox.ac.uk}). Research of Alex Scott supported by EPSRC grant EP/V007327/1.}\qquad Alex Scott\affiliationmark[1]}
\date{26 September 2024}
\begin{document}
  \maketitle

  \begin{abstract}
    Local search algorithms for NP-hard problems such as Max-Cut frequently perform much better in practice than worst-case analysis suggests. Smoothed analysis has proved an effective approach to understanding this: a substantial literature shows that when a small amount of random noise is added to input data, local search algorithms typically run in polynomial or quasi-polynomial time. In this paper, we provide the first example where a local search algorithm for the Max-Cut problem fails to be efficient in the framework of smoothed analysis. Specifically, we construct a graph with $n$ vertices where the smoothed runtime of the 3-FLIP algorithm can be as large as $2^{\Omega(\sqrt{n})}$.
    
    Additionally, for the setting without random noise, we give a new construction of graphs where the runtime of the FLIP algorithm is $2^{\Omega(n)}$ for any pivot rule. These graphs are much smaller and have a simpler structure than previous constructions.
  \end{abstract}

  \section{Introduction}

  In the Max-Cut problem, the vertices of a weighted graph have to be partitioned into two sets such that the sum of the weights of all edges crossing between the sets is maximal. Since this problem is computationally hard to solve \cite{karp1972reducibility}, local search is often used to compute reasonably good solutions in an acceptable time. A standard form of local search for the Max-Cut problem is the FLIP algorithm which repeatedly moves individual vertices across the cut until it reaches a local optimum; similarly, the $k$-FLIP algorithm moves up to $k$ vertices in each step. The FLIP algorithm performs well in practice \cite{gosti1995approximationmaxcut, dolezal1999comparison}, despite the fact that there exist graphs where the algorithm takes exponential time to terminate \cite{schaffer1991localsearchhard, elsasser2011localmaxcut, monien2010degreefourlocalmaxcut}.  

  An important approach for closing the gap between the worst-case analysis and the empirical performance of algorithms is through \defn{smoothed analysis} \cite{spielman2004smoothedanalysis}. This method has been particularly effective for local search algorithms \cite{mathey2015smoothedanalysis}. In this framework, the runtime of the FLIP algorithm is analysed after a small amount of random noise is added to the edge weights of the graph. Recent work has shown that the smoothed runtime of the FLIP algorithm is polynomial or quasi-polynomial in various different settings \cite{etscheid2017smoothedlocalmaxcut, chen2020smoothedlocalmaxcut, elsasser2011localmaxcut, giannakopoulos2022smoothed, angel2017smoothedpolynomiallocalmaxcut, bibak2021smoothedflipmaxcut}. Similarly, the 2-FLIP algorithm has been shown to have a smoothed quasi-polynomial runtime in complete graphs \cite{chen2023smoothed}.

  In this paper, we consider the 3-FLIP algorithm. In light of the previous work, it might be expected that smoothed analysis should give the same picture, with a polynomial or near-polynomial smoothed runtime. We show that this is not the case: the smoothed runtime of the 3-FLIP algorithm in an $n$-vertex graph can be exponentially large in $\sqrt{n}$. 

  \begin{theorem}
    \label{thm:3flipsuppolyruntime}
    For all $n \in \N$ there exist graphs $G$ with $\cO(n)$ vertices such that the following holds. Let $a < b$ be real numbers. Suppose that the edge weights of $G$ are chosen uniformly at random in $[a, b]$ and consider a uniformly random initial cut of $G$. Then, with high probability there are executions of the 3-FLIP algorithm from this initial cut that take $2^{\Omega(\sqrt{n})}$ steps.
  \end{theorem}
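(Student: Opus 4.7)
The plan is to build a graph consisting of a chain of approximately $\sqrt{n}$ gadgets, each on $\cO(\sqrt{n})$ vertices, so that the full graph has $\cO(n)$ vertices. The gadgets together behave as a binary counter: they are designed so that, from a suitable initial cut, an execution of the 3-FLIP algorithm causes the counter to progress through roughly $2^{\sqrt{n}}$ states, each transition being a valid 3-FLIP improvement. This directly yields the desired $2^{\Omega(\sqrt{n})}$ bound on the number of steps.

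First I would design a single \emph{bit gadget} with $\cO(\sqrt{n})$ vertices that, in isolation, admits a specific sequence of 3-FLIP improvements whose length grows with the gadget size. The crucial design constraint, which distinguishes 3-FLIP from 1-FLIP, is that the gadget should contain intermediate configurations admitting no 1-FLIP or 2-FLIP improvement but a unique 3-FLIP improvement, forcing the algorithm to follow the intended trajectory rather than cutting corners. A \emph{carry} sub-structure at one end of the gadget couples to the next gadget in the chain, so that completing one full cycle of the current gadget triggers the next gadget to start its own cycle; the cascading of carries is what produces the binary-counter behaviour.

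Next I would analyse the smoothed setting. The intended improvement sequence succeeds as long as each of $\cO(n)$ linear inequalities in the random weights (each expressing strict positivity of the gain of one intended 3-FLIP step) holds. Each such inequality compares a sum of several edge weights against another; by designing the gadget so that each inequality has a large \emph{structural margin}---obtained by duplicating edges or using parallel weight-aggregating substructures of size $\Theta(\sqrt n)$---I can make the sums concentrate and each inequality fail with probability $o(1/n)$ under the uniform distribution on $[a,b]$. A union bound over all $\cO(n)$ inequalities then gives the whole intended execution with high probability. For the uniformly random initial cut, it suffices to show that with high probability the random cut restricted to a non-trivial prefix of the chain lies in one of the configurations from which the long cascade can be started; since the gadgets have only constantly many "entry" configurations and the distribution over cuts is i.i.d.\ on vertices, a positive-probability entry event can be amplified to hold with high probability by starting the counter not at the first gadget but at a suitable later one.

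The main obstacle will be the gadget design itself: forcing 3-FLIP to take an intended long path when the edge weights are \emph{fully} random on $[a,b]$ rather than a small perturbation of fixed values, so the combinatorial structure of the gadget must itself enforce the relative ordering of all relevant gains. The trade-off between gadget size (needed to amplify structural margins so the inequalities survive the union bound) and the number of gadgets that can be packed into $\cO(n)$ vertices is exactly what produces the $\sqrt n$ exponent: each gadget must have $\Theta(\sqrt{n})$ vertices for robustness, permitting $\Theta(\sqrt{n})$ gadgets overall and thus $2^{\Theta(\sqrt{n})}$ counter states before the algorithm can halt.
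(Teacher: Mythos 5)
Your high-level blueprint---a binary counter built from roughly $\sqrt{n}$ gadgets of size $\Theta(\sqrt{n})$, with 3-flips implementing carry propagation---matches the paper's construction, and the accounting that gives the $\sqrt{n}$ exponent is the same. But the mechanism you propose for surviving the randomness is not workable, and this is exactly where the main idea of the paper lies, so the proposal has a real gap.

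Your plan is to design gadgets with a ``large structural margin'' so that each inequality concentrates and holds with probability $1 - o(1/n)$, essentially trying to build a fixed, reliable hierarchy of gains into the graph structure. With edge weights all i.i.d.\ uniform on the same interval $[a,b]$, you cannot engineer this: a binary-counter argument needs the gain of bit $i$ to dominate the combined gains of bits $i+1,\dots,k$ (the paper uses the geometric sequence $3^{-(i-1)}$), i.e.\ a hierarchy spanning $k \approx \sqrt{n}$ scales. Any sum of edge weights in a graph of total size $\cO(n)$ is $\Theta(\text{number of terms})$ on the scale of $[a,b]$, so ``duplicating edges'' or ``parallel substructures'' would need exponentially many vertices to span $2^{\Theta(\sqrt n)}$ scales; concentration only tightens a sum around its mean, it cannot create the separation. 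The paper sidesteps this entirely: the graph is just a disjoint union of $2k$ star-like trees $S^{v_i}, S^{w_i}$ (no edges at all between gadgets, no ``carry'' edges), and the hierarchy of gains is not built into the weights. Instead, the paper first runs a \emph{calibration phase} of improving 1-flips that moves some leaves of each star, and appeals to Lueker's theorem (\cref{lem:denlincombunirandvar}) that $\pm1$-signed sums of $\Theta(\sqrt n)$ uniform variables hit any target in a linear-length interval with exponentially small error, with high probability. This lets the algorithm \emph{tune} the residual gain of flipping each center $v_i$ or $w_i$ to be within $3^{-k}$ of the target $3^{-(i-1)}$. The calibration phase simultaneously resolves your random-initial-cut worry: you do not need the random cut to land in a good state on all $\sqrt n$ gadgets (which would happen with probability only $p^{\sqrt n}\to 0$), because the calibration flips put the cut into the needed configuration with high probability regardless of where it started. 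Finally, note that you are also over-engineering by insisting that intermediate configurations have a \emph{unique} improving 3-flip: the theorem only asserts that \emph{some} execution is long, so one only needs the existence of one long improving 3-flip sequence (the paper even remarks that a greedy pivot rule would avoid it).
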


  This is the first example of a local search algorithm for the Max-Cut problem whose smoothed runtime can be inefficient. We note that \cref{thm:3flipsuppolyruntime} is also quite robust: the weights can be smoothed across any interval, and the proofs could be adjusted to handle other smoothing distributions. While most implementations of the 3-FLIP algorithm will not follow the long local search sequences that we construct, this result shows that the details of the algorithm matter when performing smoothed analysis.

  Returning to the non-smoothed problem, we also give a new construction of graphs with maximum degree four where the FLIP algorithm takes exponential time to terminate. While graphs with these properties were previously constructed by Monien and Tscheuschner \cite{monien2010degreefourlocalmaxcut}, our construction has a simpler structure and uses fewer vertices. Additionally, there is exactly one execution of the FLIP algorithm in our graph that ends in a local optimum, while in previous constructions there were multiple.
  
  \begin{theorem}
    \label{thm:flipruntimeexp}
    For all $n \in \N$ there exist graphs with $\cO(n)$ vertices and maximum degree four which have an initial cut from which the unique execution of the FLIP algorithm that ends in a local optimum takes $\Omega(2^n)$ steps.
  \end{theorem}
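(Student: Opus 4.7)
The plan is to construct, for each $n$, a graph $G_n$ on $\cO(n)$ vertices that simulates an $n$-bit binary counter under the FLIP dynamics, so that a single increment of the high-order bit forces $\Omega(2^n)$ individual vertex flips. I would build $G_n$ by chaining together $n$ identical ``bit gadgets'' $B_1, \ldots, B_n$, each encoding a single binary digit, and argue by induction on $n$ that the FLIP execution from a carefully chosen initial cut must visit all $2^n$ counter states before reaching a local optimum.

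First, I would design a gadget $B$ on $8$ path-vertices $v_1 v_2 \cdots v_8$ with edges weighted in the decreasing pattern $(w_1, w_1, w_2, w_2, w_3, w_3, w_4)$ for a rapidly decreasing sequence $w_1 \gg w_2 \gg w_3 \gg w_4$. The ends $v_1$ and $v_8$ act as input/output ports. The two locally stable configurations of $B$ encode bit values $0$ and $1$, and I would check that once one port is forced by an external trigger, the imbalance propagates through the path as a deterministic sequence of single-vertex flips, each uniquely determined by the weight hierarchy, eventually forcing an output flip at the opposite port.

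Next, I would chain the $n$ gadgets by adding a bounded number of edges between consecutive port pairs $\{v_8 \text{ of } B_i,\, v_1 \text{ of } B_{i+1}\}$, with connection weights tuned so that $B_{i+1}$ is triggered exactly when $B_i$ completes a full internal cycle, yet the ongoing internal dynamics inside each $B_i$ are never disrupted by its neighbours. The induction step then shows that one half-cycle of $B_n$ requires a full cycle of $B_{n-1}$, and so on, giving $\Omega(2^n)$ total flips. Maximum degree four is preserved since every vertex has at most two intra-gadget path neighbours and at most two inter-gadget connection neighbours.

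The main obstacle is the uniqueness clause: the theorem demands that the long execution be \emph{the only} one ending in a local optimum, which means that at each of the $\Omega(2^n)$ steps there must be exactly one vertex $v$ with $\impr{v}{S} > 0$ under the current cut $S$ while every other vertex has $\impr{v}{S} < 0$. To achieve this I would choose all edge weights as distinct dyadic rationals whose ordering encodes the layered structure of the counter --- intra-gadget weights dominate inter-gadget weights, and inside each gadget weights strictly decrease from the ports inward --- so that in every reachable configuration, a single weight difference determines which of several candidate flips is strictly improving. Verifying this uniqueness amounts to a case analysis over gadget states, carried out in parallel with the main induction that tracks how the counter advances.
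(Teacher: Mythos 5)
Your high-level strategy --- an inductive chain of $8$-vertex path gadgets with a geometrically decaying weight hierarchy, designed so that the FLIP execution simulates a counter --- matches the approach the paper takes, so the intuition is right. However, the piece you describe most briefly, the inter-gadget coupling, is exactly where the construction lives or dies, and as stated it cannot produce exponential blowup. To get geometric growth, a single pass through the outer gadget $B_n$ must re-launch the entire cascade of $B_{n-1}$ \emph{several times} from the same start port. A port-to-port edge between $v_8$ of $B_i$ and $v_1$ of $B_{i+1}$ can do this at most once: in these constructions every vertex of the inner gadget flips exactly once per cascade, so after $v_8$ of $B_i$ has flipped, nothing in your scheme ever makes the start port of the inner gadget improving again, and the total number of flips is only $\cO(n)$. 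The paper instead connects three \emph{interior} vertices of the outer gadget ($v_{n,2}$, $v_{n,4}$, $v_{n,6}$) to the single start port $v_{n-1,1}$ of the inner one, and symmetrically connects $v_{n,3}$, $v_{n,5}$, $v_{n,7}$ to the other port $v_{n-1,8}$ (with a sign pattern $+,-,+$ on the weights). Each of $v_{n,2}, v_{n,4}, v_{n,6}$ flips once during the outer cascade, each flip re-destabilises $v_{n-1,1}$, and so the inner cascade runs three times, giving $\Omega(3^n)$ total flips. It is this interior-to-port fan-out, not a port-to-port link, that costs degree four and drives the recursion.

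Two more things need to be supplied. You must name what starts the cascade from the chosen initial cut: the paper attaches two auxiliary vertices with very heavy edges so that the very first flip of the top gadget's start vertex is improving, while the auxiliary vertices themselves can never flip. And ``choose distinct dyadic rationals'' is not by itself enough for uniqueness: the weight ordering has to be tied to the gadget structure so that at every step the only improving vertex is the one you intend, and you must re-check the two ports of $F_{n-1}$ that acquire new heavy edges when $F_n$ is built around them, since their local improvements change. The paper's proof is essentially that case analysis, carried out alongside the induction on $n$.
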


  This paper is organized as follows. In \cref{ssec:localsearch,ssec:smoothedanalysis} we give some background on local search and smoothed analysis for the Max-Cut problem. We then provide the new construction of graphs with the properties from \cref{thm:flipruntimeexp} in \cref{sec:flipruntime}. \cref{thm:3flipsuppolyruntime} is proved in \cref{sec:3flipsmoothedruntime}. We conclude with some discussion in \cref{sec:openproblems}.

  \subsection{Local Search}
  \label{ssec:localsearch}

  The Max-Cut problem has found many practical applications, including circuit layout design \cite{barahona1988maxcutapplication, chen1983viaminimization, pinter1984layerassignment, chang1987layerassignment}, clustering \cite{poland2006clustering}, the analysis of newsgroups \cite{agrawal2003miningnewsgroups}, and more \cite{poljak1995maximumcuts, boros1991maxcut, chen2004hierarchicalsupportvectormachines, bramoulle2007anticoordination}. However, Max-Cut is NP-hard \cite{karp1972reducibility}, and so there likely exists no efficient algorithm solving this problem. Instead, in applications of Max-Cut, local search is often used to compute large cuts in a short amount of time.

  Local search algorithms are successful in a variety of optimization problems, such as linear programming \cite{spielman2004smoothedanalysis} or the travelling salesman problem \cite{johnson1997travelingsalesmanproblem}. For Max-Cut, a simple but successful form of local search is the FLIP algorithm, which starts with some initial cut and then repeatedly moves individual vertices across the cut until it reaches a local optimum. A pivot rule determines which vertex to move if there are multiple vertices that could be moved to yield a local improvement. \defn{Local Max-Cut} is the problem of finding a local optimum, that is a cut which cannot be improved by flipping the position of a single vertex. The Local Max-Cut problem is related to Hopfield neural networks \cite{hopfield1982neuralnetworks} and Nash equilibria in party affiliation games \cite{fabrikant2004nashequilibria}.

  While the FLIP algorithm performs well in practice \cite{gosti1995approximationmaxcut, dolezal1999comparison}, worst-case analysis cannot explain its good performance. Sch{\"a}ffer and Yannakakis \cite{schaffer1991localsearchhard} have shown that the Local Max-Cut problem is PLS-complete where PLS is the complexity class of all polynomial local search problems. This is already the case even for Local Max-Cut on graphs whose maximum degree is five \cite{elsasser2011localmaxcut}. Not only does this prove that the Local Max-Cut problem cannot be solved efficiently unless $\PLS \subs \P$, it also implies that there exist graphs where the FLIP algorithm takes exponential time to terminate. In fact, Monien and Tscheuschner \cite{monien2010degreefourlocalmaxcut} have shown that there are graphs of maximum degree four with this property. We provide a new construction of such graphs, proving \cref{thm:flipruntimeexp}. Note that the condition of having maximum degree four cannot be reduced since the FLIP algorithm runs in polynomial time on graphs whose maximum degree is three \cite{poljak1995localmaxcut}.

  \subsection{Smoothed Analysis}
  \label{ssec:smoothedanalysis}

  Besides the FLIP algorithm, there are also many other algorithms whose worst case analysis incorrectly predicts their real-world performance. To explain this difference for the simplex algorithm, Spielman and Teng \cite{spielman2004smoothedanalysis} introduced smoothed analysis. This framework adds a small amount of random noise to the input data to model slight imprecisions occurring in the real world. If the expected runtime of an algorithm on these perturbed inputs is efficient, this may explain why the algorithm has a good practical performance. Smoothed analysis has been applied to many areas in computer science \cite{manthey2011smoothedanalysis, spielman2009smoothedanalysis}, in particular to local search algorithms \cite{englert2014smoothedanalysistsp, arthur2011smoothedanalysiskmeans, mathey2015smoothedanalysis}.

  To perform a smoothed analysis of the FLIP algorithm, we add a small amount of random noise to each edge weight of the graph. In this setting, Etscheid and R{\"o}glin \cite{etscheid2017smoothedlocalmaxcut} have shown that the FLIP algorithm runs in smoothed quasi-polynomial time. This means that the expected runtime is bounded by $\phi n^{\cO(\log n)}$ where $\phi$ is a parameter controlling magnitude of the perturbations applied to the inputs. This bound was later improved to $\phi n^{\cO(\smash{\sqrt{\log n}})}$ by \cite{chen2020smoothedlocalmaxcut}. Moreover, in graphs with logarithmic maximum degree and in complete graphs, the FLIP algorithm runs in smoothed polynomial time \cite{elsasser2011localmaxcut, giannakopoulos2022smoothed, angel2017smoothedpolynomiallocalmaxcut}. The best known smoothed runtime bound for the FLIP algorithm in complete graphs is $\cO(\phi n^{7.83})$ \cite{bibak2021smoothedflipmaxcut}.

  There are also local search algorithms for the Max-Cut problem which move more than one vertex across the cut in each step \cite{kernighan1970heuristic, gosti1995approximationmaxcut}. One such algorithm is the $k$-FLIP algorithm which moves up to $k$ vertices in each step. $k$-Opt Local Max-Cut is the corresponding problem of finding a cut which cannot be improved even if we are allowed to move up to $k$ vertices across the cut at once. As already noted, the 2-FLIP algorithm has a smoothed quasi-polynomial runtime in complete graphs \cite{chen2023smoothed}. This result assumes that the pivot rule never moves two vertices if moving just one of them provides a higher improvement.
  
  The proofs of these results show that with a small amount of random noise, the expected number of steps of every possible execution of the local search from any initial cut is at most quasi-polynomial. In contrast, we show with \cref{thm:3flipsuppolyruntime} that this is not true for the 3-FLIP algorithm. Even if a large amount of random noise is added to the graph and if we consider a random initial cut, the runtime of the 3-FLIP algorithm can nevertheless be as large as $2^{\Omega(\sqrt{n})}$. This provides the first example where a local search algorithm for the Max-Cut problem has a superpolynomial smoothed runtime. However, we also note that the long local search sequences in our example contain moves with three vertices where moving just one of them would yield a higher improvement. It therefore remains open whether the 3-FLIP algorithm could have an efficient smoothed runtime for pivot rules that do not make such moves, such as the greedy pivot rule, or for restricted graph classes.

  \subsection{Preliminaries}

  For any integer $k \in \N$, we write $[k] := \set{1, \dots, k}$. If $G = (V, E)$ is a graph, we denote by $N(v)$ the neighbourhood of a vertex $v$. A \defn{cut} of $G$ is a function $\sigma\colon V \to \set{-1, 1}$. This cut partitions the vertices of $G$ into $\set{v \in V : \sigma(v) = 1}$ and $\set{v \in V : \sigma(v) = -1}$. We say that an edge $u v \in E$ \defn{crosses the cut} if $\sigma(u) \neq \sigma(v)$. If $G$ is weighted, we denote its \defn{edge weights} by $X \in \R^E$. The \defn{value of a cut} $\sigma$ is
  \[
    v(\sigma) := \frac{1}{2} \sum_{u v \in E} (1 - \sigma(u) \sigma(v)) X_{u v}.
  \]
  Thus, the Max-Cut problem is the problem of finding a cut $\sigma$ which maximizes $v(\sigma)$.

  A \defn{$k$-flip sequence} is a sequence $L = V_1, \dots, V_\ell$ of subsets $V_i \subs V$ such that $\abs{V_i} \le k$ for all $i \in [\ell]$. $L$ can be interpreted as a sequence of $\ell$ moves whose $i$-th step moves all vertices from $V_i$ across the cut. For some initial cut $\sigma$, we define $\sigma^L$ to be the cut obtained from $\sigma$ by performing all moves from $L$. This means that $\sigma^L$ is defined by $\sigma^L(v) := (-1)^{n(v)} \sigma(v)$ where $n(v)$ denotes the number of occurrences of $v$ in $L$. The \defn{change in the cut value} caused by performing all moves from $L$ is $\impr{\sigma}{L} := v(\sigma^L) - v(\sigma)$. For example, for a single vertex $v$, it can be checked that
  \[
    \impr{\sigma}{v} = \sum_{u \in N(v)} \sigma(u) \sigma(v) X_{u v}.
  \]
  Finally, $L$ is \defn{improving} from $\sigma_0$ if $\impr{\sigma_{i-1}}{V_i} > 0$ for all $i \in [\ell]$ where $\sigma_i$ is defined inductively by $\sigma_i := \sigma_{i-1}^{V_i}$. With these definitions, an execution of the $k$-FLIP algorithm corresponds to an improving $k$-flip sequence ending in a local optimum.

  \section{Runtime of the FLIP algorithm}
  \label{sec:flipruntime}
  
  First, we present a small graph where the FLIP algorithm takes exponential time to terminate. Our example is given by the graph $G_n$ and its initial cut depicted in \cref{fig:expex}. Using $\cO(n)$ vertices, this graph generates exponential improving sequences of length $\Omega(3^n)$. The graph is constructed inductively as follows:
  \begin{itemize}
    \item
      The base case is a graph $F_0$ consisting of a single edge $v_{0,1} v_{0,8}$ with weight $7$ whose endpoints are on opposite sides of the cut.
    \item
      We construct a graph $F_n$ from $F_{n-1}$ by adding a path with eight new vertices $v_{n,1}, \dots, v_{n,8}$ to the graph, with consecutive vertices positioned on alternating sides of the cut and $v_{n,1}$ on the same side of the cut as $v_{n-1,1}$. For all $k \in [7]$, the edge $v_{n,k} v_{n,k+1}$ is assigned a weight of $(7 - 2 \floor{k/2}) \cdot 8^n$.
      
      The new vertices are joined to those of $F_{n-1}$ as follows: $v_{n,2}$, $v_{n,4}$, and $v_{n,6}$ are joined to $v_{n-1,1}$. These edges are assigned a weight of $8^n$ except for the edge $v_{n,4} v_{n-1,1}$ which receives the negative weight $-8^n$. The vertices $v_{n,3}$, $v_{n,5}$, and $v_{n,7}$ are connected to $v_{n-1,8}$ with edges of weight of $1$, expect that the weight of the edge $v_{n,5} v_{n-1,8}$ is $-1$.
    \item
      To obtain $G_n$, two new vertices $w_1$ and $w_2$ are added to $F_n$ with $w_1$ on the same side of the cut as $v_{n,1}$ and $w_2$ on the opposite side. Then, $w_1$ is joined to $v_{n,1}$ with an edge of weight $8^{n+1}$ and $w_2$ is joined to $w_1$ with an edge of weight $2 \cdot 8^{n+1}$.
  \end{itemize}
  In total, $G_n$ has $8 n + 4 \in \cO(n)$ vertices and $13 n + 3 \in \cO(n)$ edges whose weights range (in modulus) from $1$ to $2 \cdot 8^{n+1}$. The maximum degree of $G_n$ is four.
  
  \begin{figure}[t]
    \centering
    \begin{tikzpicture}
      \begin{scope}[shift={(0.25,0)}]
        \draw[black!20,line width=0.05cm] (-0.5,0.5) -- (1.5,0.5);
        \expexbase[name=v1,vertex labels macro=v_{0,#1},edge label=7,state=0]
        \node at (0.5,-1.5) {$F_0$};
      \end{scope}
      \begin{scope}[shift={(5.25,0)}]
        \draw[black!20,line width=0.05cm] (-0.75,0.5) -- (6.25,0.5);
        \draw[structurecolor,line width=0.05cm] (-0.5,-0.75) rectangle (3.5,1.75);
        \node[structurecolor] at (0,2.25) {$F_n$};
        \expexblackbox[name=v1,vertex labels macro=v_{n,#1},edge labels={8^n,7\cdot8^n},state=0]
        \vertex["$w_1$"]{w1}{4.75,1}
        \vertex["$w_2$"below]{w2}{4.75,0}
        \edges{(v11) --["$8^{n+1}$"pos=0.6] (w1) --["$2 \cdot 8^{n+1}$"pos=0.8] (w2)}
        \node at (2.75,-1.5) {$G_n$};
      \end{scope}
      \begin{scope}[shift={(0,-7)}]
        \draw[black!20,line width=0.05cm] (-1,0.5) -- (12,0.5);
        \draw[structurecolor,line width=0.05cm] (-0.75,-0.75) rectangle (3.75,1.75);
        \node[structurecolor] at (0,2.25) {$F_{n-1}$};
        \expexblackbox[name=v1,vertex labels macro=v_{n-1,#1},edge labels={8^{n-1},7\cdot8^{n-1}},state=0]
        \expexgadget[name=v2,shift={4.5,0},vertex labels macro=v_{n,#1},edge labels={8^n,3\cdot8^n,5\cdot8^n,7\cdot8^n},state=0]
        \expexgadetconnect[connect edge label=8^n,state=0]{v1}{v2}
        \node at (5.5,-3) {$F_n$};
      \end{scope}
    \end{tikzpicture}
    \caption{A graph $G_n$ with $\cO(n)$ vertices where the FLIP algorithm runs in time $\Omega(3^n)$.}
    \label{fig:expex}
  \end{figure}

  The idea behind this construction is that the initial configuration of $F_n$ is a local max-cut, but moving the \defn{start vertex} $v_{n,1}$ across this cut will trigger an exponential improving sequence. This sequence will then move the vertices $v_{n,2}$ to $v_{n,8}$ one-by-one across the cut. Moreover, while doing so, it forces the improving sequence of $F_{n-1}$ to be performed three times, once each after the moves of $v_{n,2}$, $v_{n,4}$, and $v_{n,6}$. This results in the exponential growth of the length of that sequence. Because the additional vertices of $G_n$ ensure that moving $v_{n,1}$ across the cut in the first step is improving, $G_n$ will perform this exponential improving sequence. Moreover, we can prove that this is in fact the unique improving sequence of $G_n$. This gives the following result.

  \begin{theorem}
    For the graph $G_n$ and its initial cut depicted in \cref{fig:expex}, the unique improving sequence that ends in a local optimum has length $\Omega(3^n)$.
  \end{theorem}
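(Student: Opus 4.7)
The plan is to prove, by induction on $n$, a strengthened lemma about the subgraph $F_n$. Let $\sigma_n^0$ be the initial cut of $F_n$ shown in \cref{fig:expex}, and let $\sigma_n^1$ denote the cut obtained from $\sigma_n^0$ by flipping every vertex of $F_n$. The lemma is: starting from $\sigma_n^0$ with only $v_{n,1}$ on the opposite side of the cut, there is a unique improving sequence that ends in a local optimum; this local optimum is $\sigma_n^1$, and its length $\ell_n$ satisfies $\ell_n = 3\ell_{n-1} + 7$, giving $\ell_n = \Omega(3^n)$. The base case $F_0$ is immediate: after $v_{0,1}$ is flipped, the unique improving move is to flip $v_{0,8}$, so $\ell_0 = 1$.

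For the inductive step, the weight assignment is designed to enforce a specific cascade. The weights form a geometric hierarchy: path edges inside $F_n$ have weights that are odd multiples of $8^n$, the cross edges joining $v_{n,2}, v_{n,4}, v_{n,6}$ to $v_{n-1,1}$ carry weights $+8^n, -8^n, +8^n$, and every edge of $F_{n-1}$ has weight of absolute value at most $7 \cdot 8^{n-1}$. The claim is that after $v_{n,1}$ is flipped, the dominant term in the improvement of each vertex identifies $v_{n,2}$ as the unique improving move. Flipping $v_{n,2}$ changes the net contribution of the three cross edges at $v_{n-1,1}$ from $-8^n$ to $+8^n$, which overpowers the bounded internal $F_{n-1}$-contributions at $v_{n-1,1}$ and makes it the unique improving vertex; by induction this triggers a cascade of length $\ell_{n-1}$ taking $F_{n-1}$ from $\sigma_{n-1}^0$ to $\sigma_{n-1}^1$. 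The next improving vertex is $v_{n,3}$; flipping it makes $v_{n-1,8}$ the unique improving vertex via the symmetric cross edges of weights $\pm 1$ (which are dominant at the relevant smaller scale), triggering the reverse $F_{n-1}$ cascade back from $\sigma_{n-1}^1$ to $\sigma_{n-1}^0$. Continuing, the path vertices $v_{n,2}, \dots, v_{n,8}$ are flipped in order, and the $F_{n-1}$ cascade runs, alternately forward and backward, after each of $v_{n,2}, v_{n,4}, v_{n,6}$. The sequence terminates at $\sigma_n^1$, which a final check shows is a local optimum, yielding $\ell_n = 3\ell_{n-1} + 7$.

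Finally, we deduce the theorem. In $G_n$, the extra edge $v_{n,1}w_1$ of weight $8^{n+1}$ dominates all other contributions at $v_{n,1}$, so flipping $v_{n,1}$ is initially improving; a direct check at every other vertex confirms that it is the unique improving initial move. The cascade from the lemma then runs, and a short constant-length sequence involving $w_1$ and $w_2$ ends the execution in a local optimum. The main technical obstacle is verifying uniqueness at each of the $\Omega(3^n)$ intermediate states: at every step we must confirm that only one vertex has positive single-vertex improvement. This reduces to a case analysis in which, at each state, a single term of the form $(2k{+}1) \cdot 8^m$ dominates a geometric tail contributed by the lower-level subgraphs. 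The signs $+,-,+$ on the three trigger edges at each level are essential: they ensure that the trigger at $v_{n-1,1}$ (respectively $v_{n-1,8}$) activates exactly when an odd number of its associated path vertices have flipped, which happens precisely three times during the cascade and accounts for the factor of $3$ in the recursion.
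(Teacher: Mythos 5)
Your overall strategy — induction on $n$ with the strengthened statement that after flipping $v_{n,1}$ the improving sequence from $F_n$ is unique, ends at the globally flipped cut, and satisfies a length recursion — matches the paper's approach. However, your description of the cascade mechanism contains a genuine error that would break the proof if written out in detail.

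You claim that flipping $v_{n,3}$ makes $v_{n-1,8}$ the unique improving vertex ``via the symmetric cross edges of weights $\pm 1$ (which are dominant at the relevant smaller scale), triggering the reverse $F_{n-1}$ cascade.'' This is wrong on two counts. First, the cross edges from $v_{n,3}, v_{n,5}, v_{n,7}$ to $v_{n-1,8}$ are never dominant at $v_{n-1,8}$: the path edge $v_{n-1,8}v_{n-1,7}$ has weight $7 \cdot 8^{n-1}$, which swamps the total contribution of at most $3$ from the weight-$\pm 1$ edges, so $v_{n-1,8}$ is improving iff it is on the same side as $v_{n-1,7}$, regardless of $v_{n,3}$. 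Second, the second and third $L_{n-1}$ cascades are triggered in exactly the same way as the first: by the moves of $v_{n,4}$ and $v_{n,6}$ through the weight-$\pm 8^n$ edges at $v_{n-1,1}$. The weight-$\pm 1$ edges at $v_{n-1,8}$ play the opposite role to what you describe: they \emph{listen} rather than \emph{trigger}. A move of $v_{n,3}$ (resp.\ $v_{n,5}$, $v_{n,7}$) is improving precisely because $v_{n-1,8}$ has already been flipped by the preceding $L_{n-1}$ cascade, so the weight-$1$ edge tips the balance between the two equal-weight path edges $v_{n,k}v_{n,k\pm 1}$ at $v_{n,k}$. You correctly state elsewhere that the cascades run after $v_{n,2}, v_{n,4}, v_{n,6}$, which contradicts your $v_{n,3}$ claim; reconciling this requires the correct mechanism.

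Two further, more minor, slips: the recursion should be $\ell_n = 3\ell_{n-1} + 10$, not $3\ell_{n-1} + 7$, since each of the three triggered cascades begins with a flip of $v_{n-1,1}$ that is not counted in $\ell_{n-1}$ as you defined it (this does not affect the $\Omega(3^n)$ conclusion). And there is no ``short constant-length sequence involving $w_1$ and $w_2$'' at the end: $w_1$ and $w_2$ never move, since the heavy edge $w_1 w_2$ of weight $2\cdot 8^{n+1}$ pins them in place; their sole purpose is to make the very first move of $v_{n,1}$ improving, after which the execution terminates at the local optimum obtained from the initial cut of $G_n$ by flipping every vertex of $F_n$.
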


  \begin{proof}
    Let $L_0 := v_{0,1}, v_{0,8}$ and
    \[
      L_n := v_{n,1}, v_{n,2}, L_{n-1}, v_{n,3}, v_{n,4}, L_{n-1}, v_{n,5}, v_{n,6}, L_{n-1}, v_{n,7}, v_{n,8}.
    \]
    The length of $L_n$ is $\Omega(3^n)$, and a simple induction shows that $L_n$ moves every vertex of $F_n$ an odd number of times across the cut ($w_1$ and $w_2$ remain fixed). We claim that $L_n$ is improving from the initial cut of $G_n$. For this purpose, we will show by induction on $n$ that every step of $L_n$, except for potentially the first step, is improving when starting from the initial cut of $F_n$. This is clearly satisfied for the base case $L_0$. For any $n > 0$, consider the following cases:
    \begin{itemize}
      \item Let $k \in \set{2, 4, 6, 8}$. Then, since the weight of the edge $v_{n,k} v_{n,k-1}$ is higher than the absolute weights of all other edges incident to $v_{n,k}$ combined, moving $v_{n,k}$ across the cut is improving if and only if $v_{n,k}$ and $v_{n,k-1}$ are on the same side of the cut. This condition is satisfied during the move of $v_{n,k}$ in $L_n$ since $v_{n,k}$ and $v_{n,k-1}$ start on opposite sides of the cut but $v_{n,k-1}$ moves across the cut immediately before $v_{n,k}$. Hence, the move of $v_{n,k}$ in $L_n$ is improving.
      \item Let $k \in \set{3, 7}$. In this case, the edges $v_{n,k} v_{n,k-1}$ and $v_{n,k} v_{n,k+1}$ have the same weight while the third edge $v_{n,k} v_{n-1,8}$ has a weight of $1$. So, there are two situations in which moving $v_{n,k}$ across the cut is improving: Either $v_{n,k-1}$ and $v_{n,k+1}$ are both on the same side of the cut as $v_{n,k}$, or this is satisfied by one of these two vertices and additionally the vertex $v_{n-1,8}$.
      
      Because $v_{n,k}$ and $v_{n-1,8}$ start on opposite sides of the cut and $L_{n-1}$ moves $v_{n-1,8}$ an odd number of times across the cut, $v_{n-1,8}$ will be on the same side of the cut as $v_{n,k}$ during the move of $v_{n,k}$ in $L_n$. This is also true for $v_{n,k-1}$ since $v_{n,k-1}$ starts on the opposite side of the cut from $v_{n,k}$ but moves exactly once across the cut before the move of $v_{n,k}$. Hence, the second condition from above is satisfied, and so the move of $v_{n,k}$ in $L_n$ is also improving.
      \item For $v_{n,5}$, the behaviour is almost exactly the same as for $v_{n,3}$ and $v_{n,7}$, except that $v_{n-1,8}$ must be on the opposite side of the cut from $v_{n,5}$ to make the move of $v_{n,5}$ improving. Since this is satisfied in $L_n$, the move of $v_{n,5}$ is also improving.
      \item Whenever $v_{n-1,1}$ moves in $L_n$ as the first step of $L_{n-1}$, it can be checked that either $v_{n,2}$ and $v_{n,6}$ will both be on the same side of the cut as $v_{n-1,1}$, or this is satisfied by one of these two vertices and additionally $v_{n,4}$ is on the opposite side of the cut. In both of these cases, the contributions of the mentioned vertices outweigh all other edges incident to $v_{n-1,1}$ combined, ensuring that the moves of $v_{n-1,1}$ in $L_n$ are improving.
      \item We also have to reverify that the moves of $v_{n-1,8}$ in $L_n$ as part of $L_{n-1}$ are improving because we have connected new edges to that vertex. However, this is no problem: The weight of the edge $v_{n-1,8} v_{n-1,7}$ is still higher than the absolute weights of all other edges incident to $v_{n-1,8}$ combined. As for $v_{n,8}$, we therefore get that the moves of $v_{n-1,8}$ in $L_n$ are improving.
      \item Finally, all other moves of $L_{n-1}$ are improving in $L_n$ by induction since we connected no new edges to these vertices and so the improvements of these moves in $L_n$ when starting from $F_n$ are the same as the improvements of these moves in $L_{n-1}$ when starting from $F_{n-1}$.
    \end{itemize}
    Hence, every move of $L_n$, except for potentially the first move, is improving when starting from the initial cut of $F_n$. Due to the additional vertices and edges from $G_n$, the first move of $L_n$ in $G_n$ will also be improving. Therefore, $L_n$ is an improving sequence of length $\Omega(3^n)$ in $G_n$.

    Lastly, we will show that $L_n$ is the only improving sequence of $G_n$ ending in a local optimum, which concludes the proof of the theorem. For this, we show that each move of $L_n$ is the unique improving move at that time step. Then, every improving sequence must perform exactly the moves from $L_n$, as required. Again, we divide into cases:
    \begin{itemize}
      \item For $k \in \set{2, 4, 6, 8}$, recall that moving $v_{n,k}$ across the cut is improving if and only if $v_{n,k}$ and $v_{n,k-1}$ are on the same side of the cut. However, these two vertices start on opposite sides of cut, and this remains true until $v_{n,k-1}$ moves. Moreover, because $v_{n,k}$ moves immediately after $v_{n,k-1}$, these vertices will again be on opposite sides of the cut after the move of $v_{n,k}$. Hence, $v_{n,k}$ and $v_{n,k-1}$ are only on the same side of the cut during the move of $v_{n,k}$ in $L_n$, and so $v_{n,k}$ cannot move at any other time step. Similar arguments apply to $v_{n-1,8}$.
      \item For $k \in \set{3, 7}$, recall that moving $v_{n,k}$ across the cut is improving if and only if either $v_{n,k-1}$ and $v_{n,k+1}$ are both on the same side of the cut as $v_{n,k}$, or if this is satisfied by one of these two vertices and additionally the vertex $v_{n-1,8}$. It can be checked that the first case never occurs while the second case is only satisfied when $v_{n,k}$ moves in $L_n$. Hence, $v_{n,k}$ cannot move at any other time step than during its move in $L_n$. Similar arguments apply to $v_{n,5}$.
      \item For $v_{n-1,1}$, note that a move of that vertex can only become improving if one of its adjacent vertices has moved. Since we already know that $v_{n-1,1}$ moves immediately after every move of $v_{n,2}$, $v_{n,4}$, or $v_{n,6}$ in $L_n$, this could only be problematic if a move of $v_{n-1,1}$ becomes improving after one of the moves of $v_{n-1,2}$ in $L_n$. However, this is never the case because $v_{n-1,2}$ will always move to the opposite side of the cut from $v_{n-1,1}$ and will therefore only reinforce the current position of $v_{n-1,1}$ on its side of the cut.
      \item Finally, for all other vertices of $F_{n-1}$, we can show by induction that they cannot move at any other time step than during their moves in $L_n$.
    \end{itemize}
    Hence, a move of a vertex of $F_n$ is only improving when this vertex moves in $L_n$. Moreover, a move of $w_1$ or $w_2$ is never improving in $G_n$. This is again due to similar reasons as for $v_{n,2}$, $v_{n,4}$, $v_{n,6}$, and $v_{n,8}$: Because the edge between $w_1$ and $w_2$ outweighs all other edges incident to these vertices, moving one of these two vertices is only improving if they are both on the same side of the cut, but this is never the case. Hence, as required, the moves of $L_n$ are the unique improving moves at their time steps.
  \end{proof}
  
  In particular, because the FLIP algorithm performs an improving sequence until it reaches a local optimum, it needs an exponential number of steps to terminate for the graph $G_n$, and this holds regardless of the chosen pivot rule. This proves \cref{thm:flipruntimeexp}.

  \section{Smoothed runtime of the 3-FLIP algorithm}
  \label{sec:3flipsmoothedruntime}

  Next, we show that the 3-FLIP algorithm can have a superpolynomial smoothed runtime even if we choose all edge weights of the graph uniformly at random. This relies on the fact that with high probability linear combinations of uniform random variables with coefficients $-1$ and $1$ can approximate any value in a large interval up to some exponentially small error. Since the local improvements during the execution of the 3-FLIP algorithm correspond to such linear combinations, this allows us to control these local improvements in graphs whose edge weights are chosen uniformly at random.

  Define the \defn{approximation error} of an interval $I$ by linear combinations of $X_1, \dots, X_n$ as
  \[
    \aperr(I, X_1, \dots, X_n) = \max_{x \in I} \l(\min_{a_1, \dots, a_n \in \set{-1, 1}} \abs*{x - \sum_{i = 1}^n a_i X_i}\r).
  \]

  \begin{lemma}
    \label{lem:denlincombunirandvar}
    For all $b \in \R$ there exists $c > 0$ such that the following holds. If $n \in \N$ is sufficiently large and $X_1, \dots, X_n$ are independent random variables chosen uniformly at random in the interval $[b,b+1]$, then it holds that
    \[
      \pr\l(\aperr\l(\l[-\frac{n}{5},\frac{n}{5}\r], X_1, \dots, X_n\r) < 2^{-cn}\r) \ge 1 - 2^{-cn}.
    \]
  \end{lemma}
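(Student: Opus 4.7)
The plan is to show that, with high probability over $X_1,\dots,X_n$, the $2^n$ signed sums $S_a := \sum_i a_i X_i$ (for $a \in \{-1,1\}^n$) form a $2^{-cn}$-dense subset of $[-n/5, n/5]$. I would first discretize the target interval at resolution $2^{-cn}$, producing $\cO(n \cdot 2^{cn})$ grid points, and reduce (via a union bound) to proving that for each grid point $x$, the probability that no sign vector $a$ satisfies $|S_a - x| \le 2^{-cn}/3$ is at most $2^{-c'n}$ for some $c' > 2c$. This margin absorbs both the union bound and the $2^{-cn}$ slack in the lemma's statement.

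For a fixed $x \in [-n/5, n/5]$, I would analyse the count $N_x := |\{a : |S_a - x| < 2^{-cn}\}|$ via its moments. Each $S_a$ is a signed sum of $n$ independent uniforms, so by a local CLT for sums of $\pm 1$-weighted uniforms its density near $x$ is of order $\Theta(1/\sqrt{n}) \cdot \exp(-\Theta((x - k(a)(b+\tfrac{1}{2}))^2/n))$, where $k(a) := \sum_i a_i$. Weighting by the binomial coefficients $\binom{n}{(n+k)/2}$ and summing over the balance $k$ then gives $\ex[N_x] \ge 2^{(1-c-o(1))n}$ uniformly over $x \in [-n/5, n/5]$, since the Gaussian-in-$x$ factor remains bounded below on this interval for any fixed $b$. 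For the second moment, I would decompose any pair $(a, a')$ into common and differing coordinates: letting $M$ be the partial sum over coordinates where $a = a'$ and $R$ over the remaining coordinates, we have $S_a = M + R$ and $S_{a'} = M - R$ with $M$ independent of $R$, so the joint event factors as $\{|M - x| \le 2^{-cn}\} \cap \{|R| \le 2^{-cn}\}$. Applying local CLT bounds to both factors and summing over Hamming distances $d$ should yield $\ex[N_x^2] = \cO((\ex N_x)^2)$.

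The main obstacle is that Paley--Zygmund on this second-moment estimate yields only a constant lower bound on $\pr(N_x > 0)$, which is far weaker than the $1 - 2^{-c'n}$ required by the union bound. I would close this gap by extending the common/differing-coordinate decomposition to $k$-tuples of sign vectors and proving $\ex[N_x^k] \lesssim (\ex N_x)^k$ for every fixed $k$ (using that $k$ indicators grouped by the equivalence classes of coincident coordinates factor into independent sums, each controlled by a local CLT bound). This sub-exponential concentration would give $\pr(N_x = 0) \le \exp(-2^{\Omega(n)})$, easily beating the discretization. An alternative strategy would be a direct construction splitting the $n$ variables into a coarse block (positioning $S_a$ within $\cO(1)$ of $x$) and a fine block of $\Theta(n)$ variables that refines the residual to exponential precision through an auxiliary subset-sum density statement provable by its own independent high-probability argument.
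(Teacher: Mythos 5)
The paper's proof is much shorter and takes a completely different route: it simply invokes a result of Lueker \cite[Corollary 3.3]{lueker1998exponentiallysmallbounds}, which already shows that $\ex\bigl(\aperr([-n/5,n/5], X_1,\dots,X_n)\bigr) \le 2^{-2cn}$ for a suitable constant $c>0$, and then applies Markov's inequality to convert the expectation bound into the claimed probability bound. There is no discretization, no union bound over grid points, and no local-CLT or moment analysis; the entire work is outsourced to Lueker's theorem on the expected optimum of subset-sum instances.

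Your from-scratch attempt has a genuine gap at the crucial step. You correctly observe that the second moment method (Paley--Zygmund) on $N_x$ only yields a constant lower bound on $\pr(N_x > 0)$, which is nowhere near strong enough to survive a union bound over $\Theta(n\cdot 2^{cn})$ grid points. But the proposed repair --- showing $\ex[N_x^k] \lesssim (\ex N_x)^k$ for every \emph{fixed} $k$ --- does not imply $\pr(N_x = 0)$ is small, let alone as small as $\exp(-2^{\Omega(n)})$. Consider the distribution where $N_x = 0$ with probability $p$ and $N_x = \ex N_x/(1-p)$ otherwise: then $\ex[N_x^k]/(\ex N_x)^k = (1-p)^{-(k-1)}$, which is bounded for each fixed $k$, yet $\pr(N_x=0)=p$ is a free constant. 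So bounded normalized moments for fixed $k$ give you no control on $\pr(N_x=0)$ at all; you would need moment bounds with $k$ growing with $n$, or a genuinely different tool (e.g.\ a concentration inequality exploiting some negative dependence among the indicators, or a martingale/self-bounding argument). Your alternative sketch (splitting the variables into a coarse block that lands within $\cO(1)$ of $x$ and a fine block that refines to exponential precision) is closer in spirit to arguments that actually work --- indeed it resembles Lueker's own technique --- but as written it is only a pointer to an unstated ``auxiliary subset-sum density statement,'' which is precisely the content of the lemma being proved, so this part is circular until that statement is proved independently.
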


  \begin{proof}
    A result of Lueker \cite[Corollary 3.3]{lueker1998exponentiallysmallbounds} implies that there exists $c > 0$ such that $\ex(\aperr([-n/5,n/5], X_1, \dots, X_n)) \le 2^{-2cn}$ if $n$ is sufficiently large. It follows that
    \[
      \pr\l(\aperr\l(\l[-\frac{n}{5},\frac{n}{5}\r], X_1, \dots, X_n\r) \ge 2^{-cn}\r) \le \frac{\ex(\aperr([-n/5,n/5], X_1, \dots, X_n))}{2^{-cn}} \le 2^{-cn}. \qedhere
    \]
  \end{proof}

  We now construct a graph $H_k$ with $n \in \cO(k^2)$ vertices where the 3-FLIP algorithm can have a smoothed runtime of $\Omega(2^k) = 2^{\Omega(\sqrt{n})}$. For any $k \in \N$, let $n_k \in \cO(k)$ be such that $2^{-c n_k / 64} < 3^{-k}$ for the $c$ given by \cref{lem:denlincombunirandvar}. The graph $H_k$, depicted in \cref{fig:tlmaxcutex}, is a disjoint union of the following graphs:
  \begin{itemize}
    \item For every $i \in [k]$, the graph contains two trees $S^{v_i}$ and $S^{w_i}$. These two trees are obtained by taking a copy of $K_{1,2n_k}$ and attaching two new vertices to each leaf. We denote the centers of $S^{v_i}$ and $S^{w_i}$ by $v_i$ and $w_i$ respectively.
  \end{itemize}

  \begin{figure}[t]
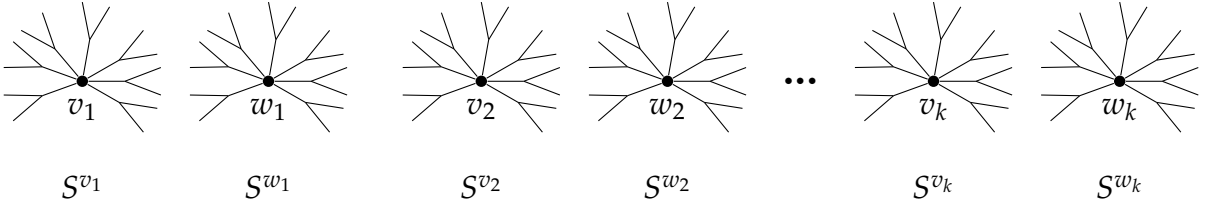

    \centering
    \begin{graph}[scale=0.7]
      \tlmaxcuttree[shift={0,0},state=1,vertex label=v_1]
      \node at (0,-2) {$S^{v_1}$};
      \tlmaxcuttree[shift={3.5,0},state=1,vertex label=w_1]
      \node at (3.5,-2) {$S^{w_1}$};
      \tlmaxcuttree[shift={7.5,0},state=1,vertex label=v_2]
      \node at (7.5,-2) {$S^{v_2}$};
      \tlmaxcuttree[shift={11,0},state=1,vertex label=w_2]
      \node at (11,-2) {$S^{w_2}$};
      \node at (13.5,0) {\tiny$\bullet\bullet\bullet$};
      \tlmaxcuttree[shift={16,0},state=1,vertex label=v_k]
      \node at (16,-2) {$S^{v_k}$};
      \tlmaxcuttree[shift={19.5,0},state=1,vertex label=w_k]
      \node at (19.5,-2) {$S^{w_k}$};
    \end{graph}
    \caption{A graph $H_k$ with $n \in \cO(k^2)$ vertices where the 3-FLIP algorithm can have a smoothed runtime of up to $2^{\Omega(\sqrt{n})}$.}
    \label{fig:tlmaxcutex}
  \end{figure}
  
  Because $S^{v_i}$ and $S^{w_i}$ have $6 n_k + 1 \in \cO(k)$ vertices each, the graph $H_k$ has a total of $2 k (6 n_k + 1) \in \cO(k^2)$ vertices as claimed. We also remark that we could make $H_k$ connected by arbitrarily adding edges between the leaves of the trees of $H_k$. Such edges are irrelevant to the improving 3-flip sequence that we construct.
  
  To study the smoothed complexity of the 3-FLIP algorithm in $H_k$, suppose that the edge weights of $H_k$ are chosen uniformly at random and consider a uniformly random initial cut $\sigma$ of $H_k$. To show that the 3-FLIP algorithm can have a superpolynomial runtime in this setting, we start by moving a subset of the neighbours of $v_i$ and $w_i$ across the cut. This allows us to get a very precise control over the local improvements of a move of $v_i$ or $w_i$ later in the execution of the algorithm. This is given by the following result.

  \begin{lemma}
    Let $b \in \R$ be a real number. Suppose that the edge weights of the graph $H_k$ are chosen uniformly at random in the interval $[b,b+1]$ and that $\sigma$ is a uniformly random cut of $H_k$. Then, with high probability there exists a 3-flip sequence $L$ that is improving from $\sigma$ such that $\abs{\impr{\sigma^L}{u} - 3^{-(i-1)}} < 3^{-k}$ for all $i \in [k]$ and $u \in \set{v_i, w_i}$.
  \end{lemma}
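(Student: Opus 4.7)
Since the trees $S^{v_i}$ and $S^{w_i}$ are pairwise vertex-disjoint, any 3-flip contained in one tree has no effect on the others, so I construct the required sub-sequence independently for each of the $2k$ centres and concatenate them to form $L$. Fix one tree $S$ with centre $v$, leaves $u_1,\dots,u_{2n_k}$, and grandchildren $u_j',u_j''$ pendant from $u_j$; let $X_j:=X_{u_jv}$, so that $\impr{\sigma^L}{v}=\sum_{j=1}^{2n_k}\sigma^L(u_j)\sigma^L(v)X_j$. The crucial structural observation is that the 3-flip $\set{u_j,u_j',u_j''}$ changes the cut value by exactly $\sigma(u_j)\sigma(v)X_j$: the edges $u_ju_j'$ and $u_ju_j''$ are invariant under this move (both endpoints flip together) while only the edge $u_jv$ changes cut status. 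Hence this move is improving iff $\sigma(u_j)\sigma(v)X_j>0$, and its net effect is precisely to flip the sign of the $j$-th term of $\impr{\cdot}{v}$, since the grandchildren are not adjacent to $v$.

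Applying \cref{lem:denlincombunirandvar} to the $2n_k$ weights $X_1,\dots,X_{2n_k}$ gives, with probability at least $1-2^{-2cn_k}$, signs $\epsilon_1,\dots,\epsilon_{2n_k}\in\set{-1,+1}$ with $\abs{\sum_j\epsilon_jX_j-3^{-(i-1)}}<2^{-2cn_k}$; since $3^{-(i-1)}\in[0,1]\subset[-2n_k/5,2n_k/5]$ for $n_k$ large, Lueker's hypothesis is satisfied, and by the choice of $n_k$ (which gives $2^{-cn_k/64}<3^{-k}$) this error is comfortably below $3^{-k}$. A union bound over the $2k$ centres handles all approximations simultaneously with high probability.

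The main obstacle is realising an approximating sign pattern as the state $\sigma^L$ of an improving sub-sequence. Writing $\epsilon_j^0:=\sigma(u_j)\sigma(v)$, one must net-flip exactly the leaves with $\epsilon_j\ne\epsilon_j^0$. For an "easy" leaf, namely one with $\sigma(u_j)\sigma(v)X_j>0$, the single 3-flip $\set{u_j,u_j',u_j''}$ is improving and suffices; for a "wrong-direction" leaf one instead uses a sub-sequence inside $\set{u_j,u_j',u_j''}$, such as a 1-flip $\set{u_j}$ or a 2-flip $\set{u_j,u_j'}$ or $\set{u_j,u_j''}$, whose improvement relies on the grandchild-edge contributions $X_{u_ju_j'},X_{u_ju_j''}$ outweighing the wrong-sign contribution of $u_jv$. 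A case analysis on $(\sigma(u_j')\sigma(u_j),\sigma(u_j'')\sigma(u_j))$ combined with the uniform distribution of the weights shows that each wrong-direction leaf is net-flippable in this way with at least some fixed positive probability $p>0$, so Chernoff yields, with high probability, a "bidirectionally flippable" set $F$ of size $\Omega(n_k)$ together with a "stuck" set $B_s$ of forced-sign leaves of controlled size. To keep the shifted target $3^{-(i-1)}-\sum_{j\notin F}\epsilon_j^0X_j$ inside Lueker's admissible range $[-|F|/5,|F|/5]$ on $F$, I pin to $B_s$ an equal number of easy leaves with matching initial signs so that the pinned sum becomes a difference of two equally-sized iid sums with mean $0$ and fluctuation $O(\sqrt{n_k\log n_k})$ by Hoeffding; re-applying \cref{lem:denlincombunirandvar} to $\set{X_j:j\in F}$ then produces a realisable approximating sign pattern whose corresponding improving 3-flip sub-sequence completes the construction.
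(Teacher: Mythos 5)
Your high-level plan—flip leaves to steer $\impr{\sigma^L}{u}$, then invoke \cref{lem:denlincombunirandvar}—matches the paper's strategy, but your realisation step has a genuine gap. The clean $3$-flip $\set{u_j,u_j',u_j''}$ only ever moves a leaf's contribution from $+$ to $-$, so you are forced into a case analysis on the wrong-direction leaves using $1$- or $2$-flips. You never carry that analysis out, and more importantly you define flippability via ``there exists an improving sub-move,'' a condition that depends on the very weight $X_j = X_{u_j v}$ you later want to treat as a fresh uniform random variable in the Hoeffding estimate for the pinned/stuck contribution and in the second application of Lueker. Conditioning on $j \in B_s$ (stuck) biases $X_j$ upward (a large $X_j$ is precisely what makes a wrong-direction leaf unflippable), so $\sum_{j\in B_s'}X_j - \sum_{j\in B_s}X_j$ is not mean-zero; it has a $\Theta(n_k)$ shift rather than $O(\sqrt{n_k\log n_k})$ fluctuation. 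Likewise the weights indexed by $F$ are no longer unconditioned uniforms, so \cref{lem:denlincombunirandvar} does not apply directly. The initial application of Lueker to all $2n_k$ weights is a red herring for the same reason: the sign pattern it produces need not be realisable, and you discard it.

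The paper sidesteps all of this by forgetting about $3$-flips entirely and using only $1$-flips of the middle vertex $s$, conditioned on the event $E_s$ that $\sigma(s)=\sigma(s_1)=\sigma(s_2)$ and $X_{ss_1}+X_{ss_2}>b+1$. The crucial point is that $E_s$ is a function of $\sigma(s_1)\sigma(s)$, $\sigma(s_2)\sigma(s)$, $X_{ss_1}$, $X_{ss_2}$ only, hence is independent of both $\sigma(u)\sigma(s)$ and $X_{us}$; when $E_s$ holds, the $1$-flip of $s$ is improving \emph{regardless} of which side $u$ is on and regardless of $X_{us}$, because the grandchild edges' contribution $X_{ss_1}+X_{ss_2}>b+1\ge|X_{us}|$ dominates. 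This removes the easy/wrong-direction asymmetry altogether: every $s\in S_u$ can be flipped either way (or not flipped), so the sign $\tau(u)\tau(s)$ is a free $\pm1$ choice, and the remaining coefficients $\sigma(u)\sigma(s)$ for $s\notin S_u$, together with the weights $X_{us}$, stay genuinely unconditioned uniforms. That independence is what lets the paper centre the outside contribution $\Delta_u$ by choosing cancelling signs on $S_u\setminus T_u$, bound it by Hoeffding, and apply Lueker cleanly on $T_u$. To repair your argument you would essentially have to rediscover this: define flippability of every leaf (easy or not) by an $E_s$-type condition that avoids looking at $X_{us}$ and $\sigma(u)\sigma(s)$.
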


  \begin{proof}
    We may assume that $b \ge -1/2$ since the case $b \le -1/2$ is analogous. Let $i \in [k]$ and $u \in \set{v_i, w_i}$. For each $s \in N(u)$, denote by $s_1, s_2 \in N(s) \sm \set{u}$ the two neighbours of $s$ apart from $u$. Let $E_s$ be the event that $\sigma(s) = \sigma(s_1) = \sigma(s_2)$ and $X_{s s_1} + X_{s s_2} > b + 1$, and let $S_u \subs N(u)$ be the subset of those neighbours $s$ of $u$ where the event $E_s$ occurs. These events are independent with $\pr(E_s) \ge 1/32$. So, a Chernoff bound implies that
    \[
      \pr\l(\abs{S_u} \ge \frac{n_k}{32}\r) = \pr\l(\abs{S_u} \ge \frac{\abs{N(u)}}{64}\r) \ge 1 - 2^{-\Omega(n_k)}.
    \]
    Define
    \[
      D_u := \sum_{s \in N(u) \sm S_u} \sigma(u) \sigma(s).
    \]
    Note that $\sigma(u) \sigma(s)$ is chosen uniformly at random in $\set{-1, 1}$, independent of the event $E_s$. Since $D_u$ is a sum of at most $2 n_k$ such independent random variables, a Chernoff bound implies that
    \[
      \pr\l(\abs{D_u} \le \frac{n_k}{64}\r) \ge 1 - 2^{-\Omega(n_k)}.
    \]
    If $E_u$ is the event that $\abs{S_u} \ge n_k/32$ and $\abs{D_u} \le n_k/64$ are both satisfied, this implies that $\pr(E_u) = 1 - 2^{-\Omega(n_k)}$.
    
    Let $m := \floor{n_k / 128}$ and let $T_u \subs S_u$ be a subset of size $2 m$. If $E_u$ occurs, then we have that $\abs{S_u \sm T_u} \ge n_k/64 \ge \abs{D_u}$. So, the fact that $\abs{N(u) \sm S_u} + \abs{S_u \sm T_u} = 2 (n_k - m)$ is even implies that there exist coefficients $a_{u s} \in \set{-1, 1}$ for $s \in S_u \sm T_u$ such that $D_u + \sum_{s \in S_u \sm T_u} a_{u s} = 0$. Define
    \[
      \Delta_u := \sum_{s \in N(u) \sm S_u} \sigma(u) \sigma(s) X_{u s} + \sum_{s \in S_u \sm T_u} a_{u s} X_{u s}.
    \]
    Note that $X_{u s}$ is distributed uniformly at random in $[b,b+1]$, independent of the events $E_u$ and $E_s$. By pairing up variables with positive and negative coefficients, $\Delta_u$ can be written as a sum of $n_k - m$ independent random variables taking values in $[-1, 1]$, each with expectation $0$, and so $\ex(\Delta_u) = 0$. Therefore, Hoeffding's inequality \cite[Theorem 2]{hoeffding1963probabilityinequalities} implies that
    \[
      \pr\l(\abs{\Delta_u} \le \frac{n_k}{512}\r) \ge 1 - 2^{-\Omega(n_k)}.
    \]
    Let $F_u$ be the event that there exist coefficients $a_{u s} \in \set{-1, 1}$ for $s \in T_u$ such that $\abs{3^{-(i-1)} - \Delta_u - \sum_{s \in T_u} a_{u s} X_{u s}} < 3^{-k}$. If $E_u$ and $\abs{\Delta_u} \le n_k/512$ are both satisfied, then $\abs{3^{-(i-1)} - \Delta_u} \le 1 + n_k/512 \le 2m/5$. Moreover, the random variables $X_{u s}$ for all $s \in T_u$ are independent of these two conditions. So, \cref{lem:denlincombunirandvar} shows that the probability of $F_u$ conditioned on these two events will be at least $1 - 2^{-\Omega(n_k)}$, and thus
    \[
      \pr(F_u) \ge \pr\l(F_u \st E_u \text{ and } \abs{\Delta_u} \le \frac{n_k}{512}\r) \pr\l(E_u \text{ and } \abs{\Delta_u} \le \frac{n_k}{512}\r) \ge 1 - 2^{-\Omega(n_k)}.
    \]
    Applying the union bound shows that with high probability all the events $F_u$ will occur. Consider the case where this happens.

    Let $L$ be the $3$-flip sequence moving every vertex $s \in S_u$ with $\sigma(u) \sigma(s) \neq a_{u s}$ individually across the cut. Because the event $E_s$ occurs for every $s \in S_u$, these moves are improving. Let $\tau := \sigma^L$ be the resulting cut, so $\tau(s) = -\sigma(s)$ for exactly those $s \in S_u$ with $\sigma(u) \sigma(s) \neq a_{u s}$. In particular, this implies that $\tau(u) \tau(s) = a_{u s}$ for all $s \in S_u$ while $\tau(u) \tau(s) = \sigma(u) \sigma(s)$ for all $s \in N(u) \sm S_u$. Therefore,
    \[
      \impr{\tau}{u} = \sum_{s \in N(u)} \tau(u) \tau(s) X_{u s} = \Delta_u + \sum_{s \in T_u} a_{u s} X_{u s},
    \]
    and so we conclude that $\abs{\impr{\tau}{u} - 3^{-(i-1)}} < 3^{-k}$ because the event $F_u$ occurs.
  \end{proof}

  Given these controlled local improvements, we now construct an improving 3-flip sequence of length $\Omega(2^k)$ in $H_k$. This 3-flip sequence acts like a binary counter on the vertices $v_1, \dots, v_k$. Whenever $v_i$ is on the side $\sigma^L(v_i)$ of the cut, this corresponds to a $0$ at position $i$ of the binary counter, while otherwise it corresponds to a $1$. For any $i$, the sequence will first perform a binary counter sequence on $v_{i+1}, \dots, v_k$, will then move $v_i$ across the cut while resetting the positions of $v_{i+1}, \dots, v_k$, and will afterwards again perform a binary counter sequence on $v_{i+1}, \dots, v_k$. Since this construction requires a move of $v_i$ to reset the positions of $v_{i+1}, \dots, v_k$ and these vertices cannot all move in the same step, the vertices $w_{i+1}, \dots, w_k$ are used to pass on the task of resetting the positions of $v_{i+1}, \dots, v_k$. This yields the following result.
  
  \begin{lemma}
    \label{lem:tlmaxcutex}
    Let $\sigma$ be a cut of the graph $H_k$ with $\abs{\impr{\sigma}{u} - 3^{-(i-1)}} < 3^{-k}$ for all $i \in [k]$ and $u \in \set{v_i, w_i}$. Then, there exists at least one 3-flip sequence that is improving from $\sigma$ and has length $\Omega(2^k)$.
  \end{lemma}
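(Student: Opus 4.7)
My first observation is that the centres $v_1, w_1, \dots, v_k, w_k$ are pairwise non-adjacent (they lie in disjoint trees) and that the neighbours of each centre are all inside its own tree $S^{v_i}$ or $S^{w_i}$. Hence, if a 3-flip sequence only moves centres, every non-centre vertex keeps its value from $\sigma$ throughout. Using the formula $\impr{\tau}{v} = \sum_{u \in N(v)} \tau(u)\tau(v) X_{u v}$, one obtains $\impr{\tau}{u} = x_u \impr{\sigma}{u}$ for every centre $u$ and every reached cut $\tau$, where $x_u \in \set{-1, +1}$ records the parity of the number of times $u$ has been moved. Moreover, for any move $M \subs \set{v_j, w_j : j \in [k]}$ of at most three centres, $\impr{\tau}{M} = \sum_{u \in M} x_u \impr{\sigma}{u}$ since the centres are mutually non-adjacent. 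By the hypothesis on $\sigma$, each summand differs from $x_u \cdot 3^{-(i-1)}$ (with $i$ the level of $u$) by less than $3^{-k}$, so the total error in the improvement of any such move is strictly less than $3 \cdot 3^{-k} = 3^{-(k-1)}$.

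With this setup, I would construct the sequence recursively using only centres. For each $i \in [k]$, define a subsequence $F_i$ that moves from the state ``$x_{v_j} = x_{w_j} = +1$ for all $j \ge i$'' to ``$x_{v_j} = x_{w_j} = -1$ for all $j \ge i$'', and a subsequence $G_i$ that moves from ``$x_{v_i} = x_{w_i} = +1$ and $x_{v_j} = x_{w_j} = -1$ for $j > i$'' to ``$x_{v_j} = x_{w_j} = -1$ for all $j \ge i$'', both using only level-$\ge i$ centres. Set $F_k = G_k = \set{v_k}, \set{w_k}$ and, for $i < k$,
\[
  F_i := F_{i+1}, \set{v_i, v_{i+1}, w_{i+1}}, G_{i+1}, \set{w_i, v_{i+1}, w_{i+1}}, G_{i+1},
\]
\[
  G_i := \set{v_i, v_{i+1}, w_{i+1}}, G_{i+1}, \set{w_i, v_{i+1}, w_{i+1}}, G_{i+1}.
\]
A straightforward induction on $k - i$ shows that the state at the start of each sub-block satisfies the precondition of that sub-block. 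To check that every move is strictly improving, note that each base-case single-vertex move has ideal improvement $3^{-(k-1)}$, and each ``carry'' move $\set{v_i, v_{i+1}, w_{i+1}}$ or $\set{w_i, v_{i+1}, w_{i+1}}$ is performed at an instant when its level-$i$ contribution equals $+3^{-(i-1)}$ and its two level-$(i+1)$ contributions equal $-3^{-i}$, yielding ideal improvement $3^{-i} \ge 3^{-(k-1)}$. Combined with the error bound, each move is strictly improving.

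Writing $f_i$ and $g_i$ for the lengths of $F_i$ and $G_i$, the recursion gives $g_k = 2$ and $g_i = 2 g_{i+1} + 2$, hence $g_i = 2^{k-i+2} - 2$. Since $F_1$ contains two copies of $G_2$, we obtain $f_1 \ge 2 g_2 = 2(2^k - 2) = \Omega(2^k)$. Because $\sigma$ itself corresponds to the all-$(+1)$ state, taking $L := F_1$ produces an improving 3-flip sequence from $\sigma$ of length $\Omega(2^k)$, as required.

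The delicate point is the borderline error estimate at level $k - 1$: the ideal improvement $3^{-(k-1)}$ of a carry move there exactly matches the worst-case total error $3 \cdot 3^{-k}$, so the strict inequality $\abs{\impr{\sigma}{u} - 3^{-(i-1)}} < 3^{-k}$ provided by the previous lemma is essential to conclude strictly positive (rather than merely non-negative) improvement. Apart from this, verifying that the pre- and post-conditions of each sub-block align is routine.
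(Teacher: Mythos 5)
Your proof is correct, and the underlying idea --- implement a binary counter on the $2k$ centres, with ``carry'' moves of the form $\{v_i,v_{i+1},w_{i+1}\}$ whose improvement is governed by the geometric-series inequality $3^{-(i-1)} - 2\cdot 3^{-i} - 3\cdot 3^{-k} = 3^{-i} - 3^{-(k-1)} \ge 0$ --- is the same as the paper's. You also correctly identify the key algebraic facts (centres are pairwise non-adjacent and their non-centre neighbours never move, so $\impr{\tau}{u} = \tau(u)\sigma(u)\impr{\sigma}{u}$ and improvements of centre-only moves are additive) and the role of the strict error bound at the borderline level $i = k-1$.

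The bookkeeping, however, is organized differently, and the difference is worth noting. The paper runs a single induction whose hypothesis only constrains the \emph{initial} cut on $V_i = \{v_i,\dots,v_k\}$ and says nothing about where the $w_j$'s end up or about the final cut; to compensate, each inductive step inserts a ``reset'' phase ($L_2$ in the paper) consisting of improving single-vertex moves that push every vertex of $U_{i+1}$ to the side $-\sigma$, so the subsequent carry cascade ($L_3$) starts from a known configuration. Your approach instead tracks the full state vector $(x_u)_u$ exactly, via the mutual recursion $F_i/G_i$ with explicit pre- and postconditions, which eliminates the need for any reset phase. In exchange you pay the modest price of maintaining two families of subsequences and verifying their interface conditions. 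The paper's route is slightly more ``robust'' (the inner counter can leave the state however it likes) whereas yours is more deterministic and arguably easier to verify line by line; both yield $\Omega(2^k)$ improving $3$-flip sequences, and both work for the same reason.

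One small presentational caveat: you should make explicit that the improvements of the base-case single-vertex moves $\{v_k\}$, $\{w_k\}$ inside the inner blocks $G_{i+1}$ always move the corresponding centre from parity $+1$ to $-1$ (contributing $+\impr{\sigma}{u} > 0$), which your pre/post conditions do guarantee, but which is easy to lose track of in the mutual recursion. With that observed, the argument is complete.
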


  \begin{proof}
    Let $V_i := \set{v_i, \dots, v_k}$, $U_i := V_i \cup  \set{w_i, \dots, w_k}$, and $U := U_1$. If $\tau$ is a cut of $H_k$ satisfying $\restr{\tau}{V \sm U} = \restr{\sigma}{V \sm U}$, we have for $u \in U$ that
    \[
      \impr{\tau}{u} = \sum_{s \in N(u)} \tau(u) \ub{\tau(s)}_{\mathclap{\hspace{0.4cm}= \sigma(s) \text{ since } s \notin U}} X_{u s} = \tau(u) \sigma(u) \sum_{s \in N(u)} \sigma(u) \sigma(s) X_{u s} = \tau(u) \sigma(u) \impr{\sigma}{u}.
    \]
    Hence, as long as we move only vertices from $U$, moving $u$ from the side $\sigma(u)$ to the side $-\sigma(u)$ of the cut changes the cut value by $\impr{\sigma}{u}$ while moving $u$ from  $-\sigma(u)$ to $\sigma(u)$ yields a change of $-\impr{\sigma}{u}$. In particular, if $i < k$, moving $u \in \set{v_i, w_i}$ from the side $\sigma(u)$ to the side $-\sigma(u)$ of the cut while simultaneously moving $v_{i+1}$ and $w_{i+1}$ from $-\sigma(v_{i+1})$ to $\sigma(v_{i+1})$ and $-\sigma(w_{i+1})$ to $\sigma(w_{i+1})$ respectively increases the cut value by
    \begin{equation}
      \impr{\sigma}{u} - \impr{\sigma}{v_{i+1}} - \impr{\sigma}{w_{i+1}} > (3^{-(i-1)} - 3^{-k}) + 2 \cdot (-3^{-i} - 3^{-k}) \ge 0, \label{eq:threeimpr}
    \end{equation}
    implying that such a move is improving.

    We use this to provide an improving 3-flip sequence of length $\Omega(2^k)$ in $H_k$. Consider any cut $\tau$ with $\restr{\tau}{V \sm U} = \restr{\sigma}{V \sm U}$ and $\restr{\tau}{V_i} = \restr{\sigma}{V_i}$. We will show by induction on $k - i$ that there exists a 3-flip sequence $L$ of length $\ge 2^{k-i}$ which is improving from $\tau$ and uses only vertices from $U_i$. The base case $i = k$ is trivially satisfied by the 3-flip sequence $L := (v_k)$ which moves only the vertex $v_k$ across the cut since this move increases the cut value by $\impr{\sigma}{v_k} > 3^{-(k-1)} - 3^{-k} > 0$ and is therefore improving.

    Now, consider an arbitrary $i \in [k-1]$ and assume that the induction hypothesis holds for $i + 1$. Suppose that $\restr{\tau}{V \sm U} = \restr{\sigma}{V \sm U}$ and $\restr{\tau}{V_i} = \restr{\sigma}{V_i}$. Since $\restr{\tau}{V_{i+1}} = \restr{\sigma}{V_{i+1}}$, we can immediately apply the induction hypothesis and get an improving 3-flip sequence $L_1$ of length $\ge 2^{k-(i+1)}$ starting from $\tau$ and using only vertices from $U_{i+1}$. Let $\tau_1 := \tau^{L_1}$ be the resulting cut after the moves from $L_1$.
    
    Next, let $L_2$ be the 3-flip sequence moving every vertex $u \in U_{i+1}$ with $\tau_1(u) = \sigma(u)$ individually to the side $-\sigma(u)$ of the cut. Again, these moves increase the cut value by $\impr{\sigma}{u} > 0$. Moreover, the resulting cut $\tau_2 := \tau_1^{L_2}$ will satisfy $\restr{\tau_2}{U_{i+1}} = -\restr{\sigma}{U_{i+1}}$ while $\tau_2(v_i) = \sigma(v_i)$ since $v_i$ was not yet moved at all. That is, $v_i$ is on the side $\sigma(v_i)$ of the cut while $v_j$ and $w_j$ for $j > i$ are on the sides $-\sigma(v_j)$ and $-\sigma(w_j)$ respectively. Then, we perform the following sequence of moves:
    \[
      L_3 := (v_i, v_{i+1}, w_{i+1}), \ (w_{i+1}, v_{i+2}, w_{i+2}), \ (w_{i+2}, v_{i+3}, w_{i+3}), \ \dots \ , \ (w_{k-1}, v_k, w_k).
    \]
    Since each step of this sequence moves a vertex $u \in \set{v_j, w_j}$ from the side $\sigma(u)$ to the side $-\sigma(u)$ of the cut while moving $v_{j+1}$ and $w_{j+1}$ from $-\sigma(v_{j+1})$ to $\sigma(v_{j+1})$ and $-\sigma(w_{j+1})$ to $\sigma(w_{j+1})$ respectively, we know from \eqref{eq:threeimpr} that all of these moves are improving. Moreover, because $L_3$ moves every vertex from $V_{i+1}$ exactly once across the cut, the resulting cut $\tau_3 := \tau_2^{L_3}$ satisfies $\restr{\tau_3}{V_{i+1}} = -\restr{\tau_2}{V_{i+1}} = \restr{\sigma}{V_{i+1}}$. Hence, we can again apply the induction hypothesis and perform an improving 3-flip sequence $L_4$ of length $\ge 2^{k-(i+1)}$ starting from $\tau_3$ and using only the vertices from $U_{i+1}$.

    Let $L$ be the concatenation of $L_1$, $L_2$, $L_3$, and $L_4$. Since each of these four sequences is improving, $L$ is improving. Moreover, $L$ uses only vertices from $U_i$ since this holds for $L_1$, $L_2$, $L_3$, and $L_4$. Lastly, because the lengths of $L_1$ and $L_4$ are $\ge 2^{k-(i+1)}$, the length of $L$ is $\ge 2 \cdot 2^{k-(i+1)} = 2^{k-i}$. Therefore, the induction hypothesis holds for $i$ and so by induction for all $i \in [k]$. With $i = 1$ and $\tau = \sigma$, this yields a 3-flip sequence of length $\ge 2^{k-1} \in \Omega(2^k)$ which is improving from $\sigma$.
  \end{proof}

  Combining these two lemmas and rescaling the edge weights proves \cref{thm:3flipsuppolyruntime}.
  
  We note that the 3-FLIP algorithm can be easily adapted in this example to avoid the improving 3-flip sequence of superpolynomial length. For instance, this can be achieved by using the greedy pivot rule which always chooses the move maximizing the local improvement in each step. We also note that our example is quite sparse. So, for specific pivot rules and dense graphs, the smoothed complexity of the 3-FLIP algorithm remains open.

  It is also interesting to consider the 2-FLIP algorithm. However, the example from above cannot be adapted to show that the 2-FLIP algorithm would have a superpolynomial smoothed runtime since it would only yield improving 2-flip sequences of length $\Omega(k^2)$. In fact, we can prove the following.
  
  \begin{theorem}
    \label{thm:twolmaxcutoptindex}
    Let $G$ be a graph and $I \subseteq V$ be an independent set. Then, every improving 2-flip sequence of $G$ which uses only vertices from $I$ has a length of $\cO(n^2)$.
  \end{theorem}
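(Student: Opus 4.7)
The plan is to exploit the independence of $I$ to reduce the dynamics to a clean per-vertex setting, and then use a rank-weighted integer potential that strictly decreases at every improving step. Since $I$ is independent, every neighbour of each $u \in I$ lies in $V \setminus I$, so as long as only vertices of $I$ are flipped the signs of $V \setminus I$ remain pinned to their initial values $\sigma_0$. Setting $d_u := \sum_{u' \in N(u)} \sigma_0(u') X_{u u'}$, this yields $\impr{\sigma}{u} = \sigma(u) d_u$ for every reachable cut $\sigma$; and for distinct $u, v \in I$, the fact that $u v \notin E$ gives the decoupled formula $\impr{\sigma}{\set{u,v}} = \sigma(u) d_u + \sigma(v) d_v$. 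Hence the improvement of any $2$-flip step is a sum of per-vertex contributions depending only on the current signs of the flipped vertices and their fixed constants $d_u$.

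Next, I would order the vertices of $I$ as $u_1, \dots, u_m$ in decreasing order of $|d_{u_j}|$ (breaking ties arbitrarily, and placing vertices with $d_u = 0$ at the end), and define the potential
\[
  \Phi(\sigma) := \sum_{j = 1}^m (m - j + 1) \cdot \ind\l[\sigma(u_j) d_{u_j} > 0\r],
\]
which is a rank-weighted count of the vertices currently sitting on their ``flip-improving'' side. Clearly $0 \le \Phi \le m(m+1)/2$. The core step is to verify that every improving move decreases $\Phi$ by at least $1$: for an improving single flip of $u_j$, the quantity $\sigma(u_j) d_{u_j}$ switches from positive to negative, so $\Phi$ drops by exactly $m - j + 1 \ge 1$; for an improving $2$-flip $\set{u_i, u_j}$ with $i < j$, the inequality $|d_{u_i}| \ge |d_{u_j}|$ together with positive improvement forces $\sigma(u_i) d_{u_i} > 0$, so the indicator for $u_i$ drops, and then either the indicator for $u_j$ also drops (if $\sigma(u_j) d_{u_j} > 0$), or strict improvement forces $|d_{u_i}| > |d_{u_j}|$ and the net change is $-(m-i+1) + (m-j+1) = -(j-i) \le -1$. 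Combining the per-step decrease with the upper bound on $\Phi$ then gives a sequence length of at most $m(m+1)/2 = \cO(n^2)$.

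The only genuinely delicate point is that a $2$-flip can flip a currently happy vertex back to its unhappy side, which is why a naive potential like the cut value itself yields only an exponential bound. The rank-weighted structure of $\Phi$ is engineered precisely to absorb this: the vertex that becomes unhappy in such a step must have strictly smaller $|d|$ than its simultaneously-happying partner, so its weight in $\Phi$ is strictly smaller and cannot cancel the other's drop. Vertices with $d_u = 0$ are harmless, since they contribute nothing to $\Phi$ and can only participate as passive passengers in a $2$-flip whose improvement is driven by a non-zero partner.
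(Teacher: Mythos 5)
Your proposal is correct and is essentially the paper's own argument: the paper phrases it in terms of tokens placed on improving-to-flip vertices, where each step removes a token or moves one to a higher index (after sorting by decreasing $|d_u|$), while your rank-weighted potential $\Phi$ is precisely the total remaining lifetime of those tokens, so the per-step decrease and the $\cO(n^2)$ bound coincide.
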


  \begin{proof}
    Let $I = \{v_1, \dots, v_k\}$. Since $I$ is an independent set, moving a vertex $v_i$ across the cut will always result in the same change to the cut value. Denote this change by $\Delta_i > 0$ and let $s_i$ be such that moving $v_i$ from the side $s_i$ to $-s_i$ of the cut changes the cut value by $\Delta_i$ while moving $v_i$ from $-s_i$ to $s_i$ changes the cut value by $-\Delta_i$. By sorting the vertices $v_1, \dots, v_k$, we may assume that $\Delta_1 \ge \Delta_2 \ge \dots \ge \Delta_k$.

    Let $L$ be any improving 2-flip sequence of $G$. For any cut $\sigma$ of $G$, assign a token to every vertex $v_i$ with $\sigma(v_i) = s_i$ and consider how these tokens change during the execution of $L$. For this purpose, assume that a single step of $L$ moves two vertices $v_i$ and $v_j$ across the cut where $i < j$. Since this move is improving and $\Delta_i \ge \Delta_j$, this step must move $v_i$ from the side $s_i$ to $-s_i$ of the cut. In particular, $v_i$ loses a token during this step. If $v_j$ gains a token during this move, we imagine the token to be moved from $v_i$ to $v_j$.

    Hence, during every step of $L$, either at least one token gets removed from $G$ or a token moves from some vertex $v_i$ to a vertex $v_j$ with $j > i$. Since there are at most $k$ tokens at the beginning and each of these tokens can move at most $k - 1$ times, it follows that the length of $L$ is at most $k^2 \in \cO(n^2)$.
  \end{proof}

  As the 3-flip sequence from \cref{lem:tlmaxcutex} moved only vertices from an independent set, no small change to this example can yield superquadratic improving 2-flip sequences. We remark that Kaul and West \cite{kaul2008longlocalsearchmaxcut} constructed unweighted graphs with quadratic improving 1-flip sequences. These sequences would still be improving even after adding a small amount of random noise.

  \section{Discussion and open problems}
  \label{sec:openproblems}

  For NP-hard problems, smoothed analysis is an extremely powerful framework for proving that algorithms typically run in polynomial or near-polynomial time. However, the results in this paper show that the effectiveness of this framework can depend on the details of the analysed algorithms. It would be of great interest to find more general conditions under which polynomial runtime can be achieved in this framework.

  Many interesting questions remain open for further investigation.

  \begin{itemize}
    \item Is the smoothed runtime of the FLIP algorithm on arbitrary graphs polynomial? Only quasi-polynomial smoothed runtime bounds are known \cite{etscheid2017smoothedlocalmaxcut, chen2020smoothedlocalmaxcut}.
    \item Is the smoothed runtime of the 2-FLIP algorithm on complete graphs polynomial? Again, only quasi-polynomial runtime bounds are known \cite{chen2023smoothed}.
    \item What is the behaviour of the 2-FLIP algorithm on arbitrary graphs? Boodaghians, Kulkarni, and Mehta \cite{boodaghians2018smoothed} showed that an efficient smoothed runtime of the 2-FLIP algorithm would also prove that the sequential-better-response algorithm for $k$-strategy network coordination games has an efficient smoothed runtime.
    \item We have shown that the smoothed runtime of the 3-FLIP algorithm can be as large as $2^{\Omega(\smash{\sqrt{n}})}$. Is it possible to avoid this by choosing the right pivot rule? It would be very interesting to determine whether the 3-FLIP algorithm has an efficient smoothed runtime for specific pivot rules (for example the greedy pivot rule, or the uniform random pivot rule) or in certain graph classes (like complete graphs). It seems difficult to adapt the example from this paper to these settings. The same questions arise for the $k$-FLIP algorithm with $k > 3$.
  \end{itemize}

  \textbf{Acknowledgements.} We would like to thank an anonymous referee for making us aware of the reference \cite{lueker1998exponentiallysmallbounds}.
  
  {
    \fontsize{11pt}{12pt}
    \selectfont

    \hypersetup{linkcolor=structurecolor}
\newcommand{\etalchar}[1]{$^{#1}$}

  }
\end{document}